\definecolor{purple}{rgb}{.9,0,.9}
\def\XXint#1#2#3{{\setbox0=\hbox{$#1{#2#3}{\int}$}
        \vcenter{\hbox{$#2#3$}}\kern-.5\wd0}}
\def\Nat{{\, \hbox{N \kern-1.25em I}\ \;}}
\def\Real{{\, \hbox{R \kern-1.25em I}\ \;}}
\def\Pos{{\, \hbox{P \kern-1.15em I}\ \;}}
\def\Int{{\, \hbox{\tenss Z \kern-1.1em Z} \,}}
\def\rfa{\qquad {\rm for \ all}\ \ }
\font\teneuf=eufm7 at 11pt
\font\seveneuf=eufm7
\font\fiveeuf=eufm5
\def\cE{{\cal E}} \def\cN{{\cal N}}\def\cS{{\cal S}}
\def\cT{{\cal T}}\def\cU{{\cal U}}\def\cV{{\cal V}}\def\cW{{\cal W}}
\def\bd{{\bf d}}
\def\be{{\bf e}}\def\bff{{\bf f}}
\def\bn{{\bf n}}
\def\bu{{\bf u}}\def\bv{{\bf v}}\def\bw{{\bf w}}
\def\bz{{\bf z}}
\def\bF{{\bf F}}
\def\bK{{\bf K}}\def\bL{{\bf L}}
\def\bR{{\bf R}}
\def\bS{{\bf S}}\def\bT{{\bf T}}
\def\b0{{\bf 0}}
\def\Lin{{\mathop{\rm Lin}\,}}
\newtheorem{theorem}{Theorem}[section]
\newtheorem{lemma}[theorem]{Lemma}
\def\ddd{d}
\def\beqn{\begin{equation}}
\def\eeqn{\end{equation}}
\def\Nat{\mathbb{N}}
\def\Real{\mathbb{R}}
\def\Pos{\mathbb{P}}
\numberwithin{equation}{section}
\def\Int{{\, \hbox{\tenss Z \kern-1.1em Z} \,}}
\def\Lin{{\rm Lin}}
\def\bLambda{{\boldsymbol \Lambda}}
\def\bGamma{{\boldsymbol \Gamma}}
\newcommand{\Fc}{\bF^\text{c}}
\newcommand{\da}{\,\text{d}a_y}
\newcommand{\dz}{\,\text{d}a_{\bz}}
\newcommand{\dr}{\,\text{d}r}
\newcommand{\half}{\frac{1}{2}}
\newcommand{\abs}[1]{|#1\rvert}
\title{Microphysical derivation of the\\ Canham--Helfrich free-energy density}
\author{Brian Seguin and Eliot Fried}
\begin{document}
\date{}

\maketitle

\begin{abstract}
\noindent
The Canham--Helfrich free-energy density for a lipid bilayer has drawn considerable attention. Aside from the mean and Gaussian curvatures, this free-energy density involves a spontaneous mean-curvature that encompasses information regarding the preferred, natural shape of the lipid bilayer. We use a straightforward microphysical argument to derive the Canham--Helfrich free-energy density. Our derivation (i) provides a justification for the common assertion that spontaneous curvature originates primarily from asymmetry between the leaflets comprising a bilayer and (ii) furnishes expressions for the splay and saddle-splay moduli in terms of derivatives of the underlying potential.
\\[8pt]
Mathematics Subject Classification: 92C10
\end{abstract}

\section{Introduction}

Biomembranes are ubiquitous in nature as they form cell walls. An essential element of a biomembrane is a lipid bilayer, which is composed of lipid molecules. These molecules have a hydrophilic head and a hydrophobic tail. Due to these properties, when a large number of lipid molecules are placed in a solution they self assemble, under suitable conditions, into two-dimensional structures consisting of two leaflets (or monolayers). The lipid molecules are oriented so that the tails of the molecules in each leaflet are in contact with each other, while the heads are in contact with the ambient solution; see, for example, Lasic (1988). These two-dimensional structures often close to form vesicles and are usually between 50 nanometers and tens of micrometers in diameter but only a few nanometers thick, as observed by Luisi \& Walade (2000). Due to these dimensions, lipid bilayers are usually modeled as surfaces.

There is a long history of work dealing with specifying a free-energy density for a surface representing a lipid bilayer. Over forty years ago, Canham (1970), in an effort to explain the biconcave shape of red blood cells, proposed a bending-energy density dependent on the square of the mean curvature. A few years later, Helfrich (1973) also considered an elastic-energy density depending on the curvature of the surface associated with a lipid bilayer. In the Canham--Helfrich theory, the free-energy density $\psi$ of the lipid bilayer is of the form
\beqn\label{CHenergy}
\psi=\half \kappa(H-H_\circ)^2+\bar\kappa K,
\eeqn
where $H$ and $K$ are the mean and Gaussian curvatures, $\kappa$ and $\bar\kappa$ are the bending moduli, and $H_\circ$ is the spontaneous mean-curvature. The curvatures $H$ and $K$ are scalar invariants that describe the shape of the lipid bilayer. (See Appendix \ref{appsurf} for the definitions of $H$ and $K$.)  The splay modulus $\kappa$, which is always positive, characterizes the resistance to changes in mean curvature, while the saddle-splay modulus $\bar\kappa$ is related to the resistance to changes in Gaussian curvature. The spontaneous mean-curvature $H_\circ$ is determined by the preferred, natural, local shape of the lipid bilayer. Although Canham (1970) and Helfrich (1973) were the first to use a free-energy density of the form \eqref{CHenergy} for biomembranes, such expressions were considered by Poisson (1812), in the $H_\circ=0$ case, Germain (1821), and the Cosserat brothers (1909) in the study of elastic surfaces. Although not present in the Canham--Helfrich theory, it is also possible to introduce a spontaneous Gaussian-curvature $K_\circ$, as do Maleki et al.~(2012), in which case the last term on the right-hand side of \eqref{CHenergy} is replaced by $\bar\kappa(K-K_\circ)$. If $\bar\kappa$ and $K_\circ$ are both constant, the spontaneous Gaussian-curvature can be neglected since the free-energy density is only determined up to an additive constant.

As mentioned above, the spontaneous curvatures $H_\circ$ and $K_\circ$ are determined by the preferred, natural, local shape of the lipid bilayer. As Seifert (1997) notes, one of the main causes for spontaneous curvature is asymmetry between the two leaflets making up the bilayer. The asymmetry can manifest itself in different ways. For example, D\"{o}bereiner et al (1999) observe that spontaneous curvature may arise from differences in the chemical properties of the aqueous solution on the two sides of the lipid bilayer, while McMahon \& Gallop (1998) explain that lipid molecules with different configurations may also lead to nonzero $H_\circ$ and $K_\circ$.

Several derivations of the Canham--Helfrich free-energy density have appeared in the literature. For example, in a work concerned with lipid monolayers, Safran (1994) derived \eqref{CHenergy} by modeling the lipid molecules in a monolayer as springs. In Safran's (1994) derivation, the springs are allowed to change length with the bending of the bilayer but may not interact with one another. Granted that the lipid monolayer is incompressible, Safran (1994) expressed the length of a generic molecule in terms of the underlying mean and Gaussian curvatures. Upon making the additional assumption that the product of the spontaneous curvature with the natural length of the molecules is small, he found that the energy governing each spring takes a form consistent with \eqref{CHenergy}.

Paunov et al.~(2000) also used statistical methods to derive an energy for a lipid monolayer. Their expression for the energy consists of two terms: one associated with bending, which is taken from the work of Safran (1994), and the other associated with mixing. The mixing term accounts for interactions between the monolayer and the ambient solution. For the particular cases of spherical and cylindrical monolayers, Paunov et al.~(2000) found that, when expanded up to include only terms up to the square of the inverse radius of curvature, their expression for the energy of a monolayer agrees with \eqref{CHenergy}.

Following Ljunggren \& Eriksson (1995), who took the view that the bending energy of a bilayer is due to electrical forces, using the Debye--Huckel approximation, Winterhalter \& Helfrich (1988) obtained expressions for the bending energy of planar, cylindrical, and spherical lipid bilayers. As with the derivation of Paunov et al.~(2000), Winterhalter \& Helfrich (1988) considered only terms up to the square of the inverse radius of curvature and, in so doing, obtained expressions for the bending moduli and spontaneous curvature.

A derivation for the bending energy of a bilayer was also provided by Seifert (1997), who models the bilayer as a surface with molecules populating both sides, the molecular density being different on the sides.  Disregarding the structure of these molecules, Seifert (1997) assumes that the energy of the bilayer is determined by the extent to which these densities deviate from an equilibrium density. The densities are defined on the midsurface of each monolayer. To obtain an energy density defined on the surface, the two molecular densities are projected onto the surface. Since these projections depend on the curvature of the surface, the energy density of the surface inherits dependance on the curvature. By assuming that the curvature of the surface is small, Seifert (1997) finds an energy density for the surface of the Canham--Helfrich variety with no spontaneous curvature.

In the present paper, we follow an approach used by Keller \& Merchant (1991) to derive the energy of a liquid surface. To obtain their result, Keller \& Merchant (1991) expanded the energy of a mass of liquid in powers of the range $\ddd$ of the intermolecular potential. Beyond the first term of this expansion, which is volumetric, the second and third terms of this expansion are areal. Of the areal contributions to the energy, the first is due to surface tension and is proportional to the area of the underlying surface, and the second is proportional to the integral over the surface of $H^2-{\frac13}K$ and, thus, to the bending-energy density arising in the plate theory of F\"oppl~(1907) and von K\'arm\'an~(1910) with Poisson's ratio equal to $1/3$.

In our derivation of the free-energy density of a lipid bilayer, we neglect interactions between the lipid molecules and the ambient solution and instead focus on interactions between the lipid molecules themselves. These molecules are modeled as rigid rods, as in the work of Seguin \& Fried (2012). The rods, representing the molecules, are distributed over the surface that represents the lipid bilayer. By assuming that the lipid molecules are oriented perpendicular to the surface and interact with each other only when they are within a distance of $\ddd$, we obtain an expression for the free-energy density for the bilayer. Following the approach of Keller \& Merchant (1991), we then expand the free-energy density in powers of $\epsilon:=\ddd/\ell$, where $\ell$ is a characteristic length associated with the bilayer, and identify the Canham--Helfrich free-energy density as the coefficient of the term of order $\epsilon^2$.  This expansion rests on the assumption that the smallest possible curvature of the membrane is much larger than its thickness and, thus, our derivation does not hold for membranes that are undergoing crumpling, pitting, or budding.  Our derivation also neglects the effect of any molecular fluctuations.  While these fluctuations are important in crumpling, budding, and pitting events, as described by Lipowsky (1990) and Sackmann (1994), all of these events involve the membrane having large curvature and, thus, are not included in our theory.  Moreover, we view the Canham--Helfrich free-energy density as a mean-field description of a lipid bilayer. In classical mean-field theory, fluctuations are neglected.

This argument provides a direct and easy derivation of the Canham--Helfrich free-energy density. It also affords several observations. For one, if the lipid bilayer is symmetric, which occurs when its leaflets are of identical composition, and the lipid molecules are uniformly distributed, then the spontaneous curvatures $H_\circ$ and $K_\circ$ must vanish. Moreover, since our derivation gives concrete expressions for the bending moduli, those expressions might be used to gain insight regarding the sign of the saddle-splay modulus $\bar\kappa$.

Our paper is outlined as follows. In Section~\ref{sectmain}, we describe our approach to modeling lipid molecules, write down a free-energy density for a lipid bilayer, and state the result of expanding this energy in powers of $\ddd/\ell$.  The details of this expansion can be found in the Appendices. This is followed, in Section~\ref{sectint}, with a discussion of the resulting expansion and some consideration of special cases. In Section~\ref{sectfd}, we discuss possible extensions of the method used in Section~\ref{sectmain}, including some possible approaches to weakening the assumptions made there. Finally, in Section~\ref{sectsum}, we summarize our primary results and make some concluding remarks.

\section{Free-energy density of an isolated lipid bilayer}
\label{sectmain}

In this section, guided by the work of Keller \& Merchant (1991), we derive an expression for the free-energy density of a single closed (lipid) bilayer suspended in solution. Our result is based on the following assumptions:
\begin{itemize}
\item[(i)] The thickness of the bilayer is small relative to its average diameter.
\item[(ii)] The (lipid) molecules can be modeled as one-dimensional rigid rods.
\item[(iii)] The molecules do not tilt relative to the orientation of the bilayer.
\item[(iv)] Interactions between the bilayer and the solution are negligible.
\end{itemize}

On the basis of Assumption (i), which is commonly imposed in models for closed bilayers (Luisi \& Walade (2000)), we identify both leaflets of the bilayer with a single orientable surface $\cS$. This surface can adopt a large variety of shapes; however, being made up of molecules of a finite size, it cannot support arbitrarily large curvatures. Let $\ell$ denote the smallest stable radius of curvature that $\cS$ is capable of exhibiting. For each leaflet $i=1,2$ of the bilayer, we introduce a molecular number density $W_i$ defined on $\cS$ and measured per unit area of $\cS$. The total number of molecules in leaflet $i=1,2$ is then given by the integral
\beqn
\int_\cS W_i(y)\da,
\eeqn
where $\text{d}a_y$ denotes the area element on $\cS$. Taking $W_i$ to be defined on $\cS$ amounts to assuming that the centers of the lipid molecules of both leaflets lie on $\cS$, which is consistent with assuming that the bilayer is thin relative to its average diameter. In general, the number densities of the leaflets may differ.

On the basis of Assumption (ii), the configuration of each such molecule may be described by a point on $\cS$ and a unit-vector-valued director (Seguin \& Fried (2012)), with the point representing the center of the rod and the director representing the orientation of the rod. Without loss of generality, we suppose that the director tips point toward the headgroups of the molecules. We also assume that the interaction between a pair of molecules is governed by a potential that depends on the vector between the points and the directors and is restricted such that only molecules separated by distances less than some fixed cut-off distance $\ddd$ may interact. We assume that the cut-off distance $\ddd$ is small relative to the smallest radius of curvature $\ell$ the bilayer can support, so that $\ddd\ll\ell$ or, equivalently,
\beqn\label{small}
\epsilon:=\frac{\ddd}{\ell}\ll1.
\eeqn

Given molecules located at two points $x$ and $y$ on $\cS$ and corresponding directors $\be(x)$ and $\bff(y)$, we consider interaction potentials, with dimensions of energy, of the generic form
\beqn
\Phi^\ddd(x-y,\be(x),\bff(y))
=\phi(\epsilon^{-2}\abs{x-y}^2,(x-y)\cdot\be(x),(x-y)\cdot\bff(y),\be(x)\cdot\bff(y)),
\label{Psigeneric}
\eeqn
where, to ensure that rods separated by a distance greater than $\ddd$ may not interact, $\phi$ must satisfy
\beqn\label{ieradius}
\phi(s,a,b,c)=0
\quad\text{if $s\ge \ell$ for all $(a,b,c)\in\Real\times\Real\times[-1,1]$.}
\eeqn
Notice that $\Phi^d$ depends on $d$, while $\phi$ does not. That $\Phi^\ddd$ must depend on the vectors $x-y$, $\be(x)$, $\bff(y)$ only through the scalars $\abs{x-y}$, $(x-y)\cdot\be(x)$, $(x-y)\cdot\bff(x)$, and $\be(x)\cdot\bff(x)$ is shown by Seguin \& Fried (2012). This is also a consequence of the requirement that the potential be frame-indifferent.  Notice that the interaction energy $\phi(s,a,b,c)$ between two molecules can change if the head and tail of one of the molecules flips. Interaction potentials of this form may therefore account for differences between the polarities of the heads and tails of the molecules.  In taking $\Phi^\ddd$ to depend on the cutoff distance $\ddd$ as shown in \eqref{Psigeneric}, we follow Keller \& Merchant (1991). In addition to a potential $\Phi_{ii}^\ddd$ that accounts for interactions between molecules in each leaflet $i=1,2$, we introduce a potential $\Phi_{12}^\ddd=\Phi_{21}^\ddd$ that accounts for interactions between molecules belonging to different leaflets. Although the specific forms of these potentials may differ, they share the same general properties to the extent that they satisfy \eqref{Psigeneric} and \eqref{ieradius}.  A graph of a particular $\phi$ for fixed $a,b$, and $c$ is given in Figure \ref{fig0}. This function is based on the Lennard-Jones 6-12 potential. More tangible examples of interaction potentials are provided by Yuan (2010).

\begin{figure}[t]
\centering
\includegraphics[width=2.5in]{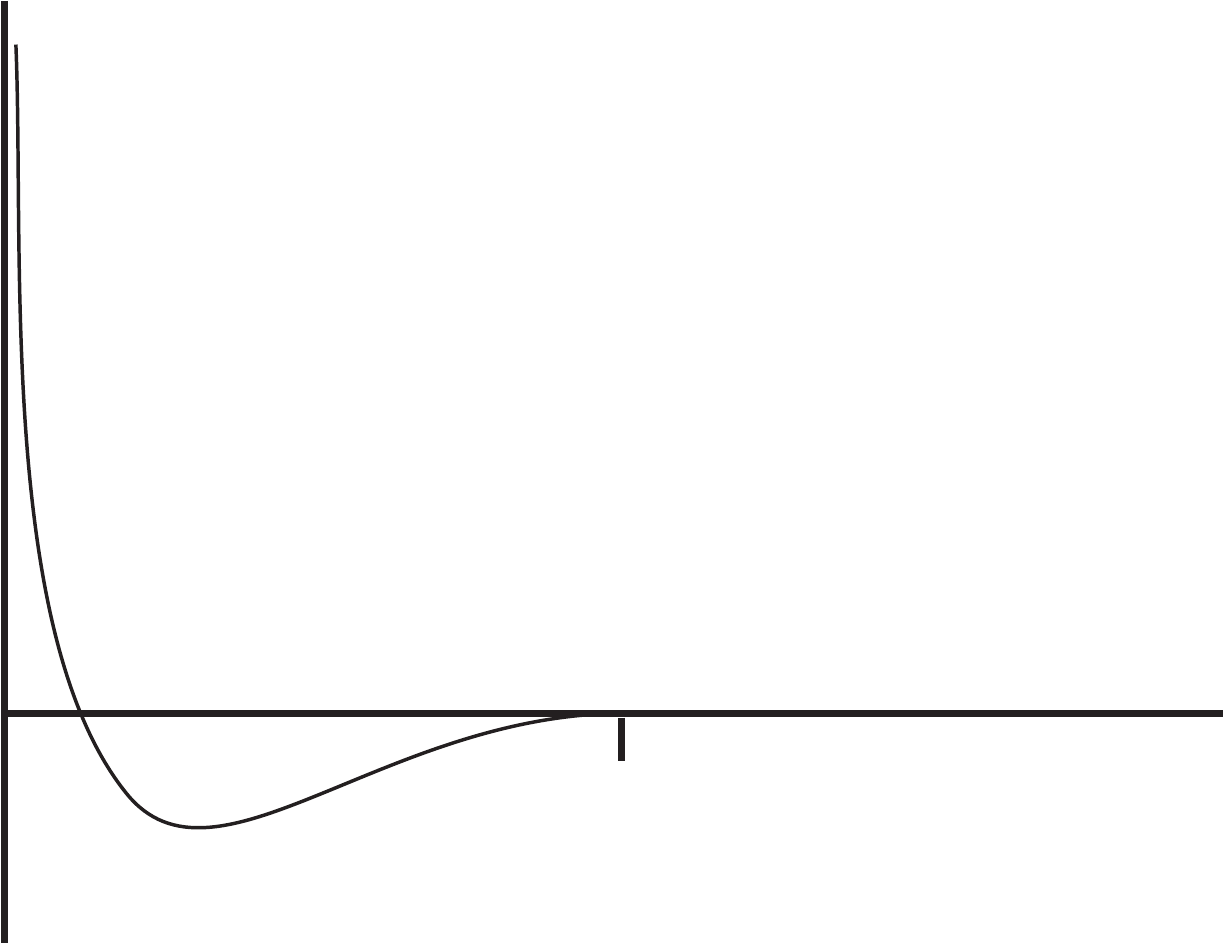}
\put(5,32){$s$}
\put(-91,20){$\ell$}
\put(-190,32){$0$}
\put(-190,130){$\phi$}
\caption{Graph of $s\mapsto\phi(s,a,b,c)$ for fixed $a,b$, and $c$.}
\label{fig0}
\end{figure}

Without loss of generality, we choose to orient $\cS$ with the unit-normal field that points into the fluid adjacent to the head groups of leaflet 1 and denote that field by $\bn$. On the basis of Assumption (iii), it then follows that the directors of molecules in leaflets $1$ and $2$ coincide with $\bn$ and $-\bn$, respectively. Bearing in mind the cut-off property \eqref{ieradius}, define $\cS_\ddd(x)$ by
\beqn
\cS_\ddd(x):=\{y\in\cS:\abs{x-y}\leq\ddd\}.
\eeqn
The integrals
\beqn
\label{edtt}
\psi_{11}(x)=\epsilon^{-2}\int_{\cS_\ddd(x)}
\Phi_{11}^\ddd(x-y,\bn(x),\bn(y))W_{1}(x)W_{1}(y)\da
\eeqn
and
\beqn
\label{edbb}
\psi_{22}(x)=\epsilon^{-2}\int_{\cS_\ddd(x)}
\Phi_{22}^\ddd(x-y,-\bn(x),-\bn(y))W_{2}(x)W_{2}(y)\da,
\eeqn
then represent the contributions to the value of free-energy density of the bilayer at $x$ due, respectively, to interactions between the molecules of leaflets $1$ and $2$ at $x$ with all other molecules in their own leaflets.  Notice that the interaction energy $\Phi^d_{11}$ appearing in the integrand of \eqref{edtt} gives the interaction energy between a lipid molecule at $x$ and a lipid molecule at $y$. The two number densities are needed to account for the number of molecules at $x$ and the number of molecules at $y$. An analogous explanation explains the presence of the two densities in \eqref{edbb}. Similarly, the integrals
\beqn
\label{edtb}
\psi_{12}(x)=\epsilon^{-2}\int_{\cS_\ddd(x)}
\Phi_{12}^\ddd(x-y,\bn(x),-\bn(y))W_1(x)W_2(y)\da,
\eeqn
and
\beqn
\label{edbt}
\psi_{21}(x)=\epsilon^{-2}\int_{\cS_\ddd(x)}
\Phi_{21}^\ddd(x-y,-\bn(x),\bn(y))W_2(x)W_1(y)\da,
\eeqn
represent the contributions to the free-energy density of the bilayer at $x$ due, respectively, to interactions between the molecules of leaflet $1$ at $x$ with all the molecules of leaflet $2$ and to interactions between the molecules of leaflet $2$ at $x$ with all the molecules of leaflet $1$. Notice that interactions between the molecules in leaflet $1$ at $x$ with those in leaflet $2$ at $x$ are encompassed by $\psi_{12}(x)$; analogously, interactions between the molecules in leaflet $2$ at $x$ with those in leaflet $1$ at $x$ are encompassed by $\psi_{21}(x)$.

Finally, on the basis of Assumption (iv), the free-energy density of the bilayer at $x$ is determined completely by the sum of the four contributions defined in \eqref{edtt}--\eqref{edbt} and thus has the form
\beqn\label{energydecomp}
\psi(x)=\psi_{11}(x)+\psi_{22}(x)+\psi_{12}(x)+\psi_{21}(x).
\eeqn

On substituting \eqref{edtt}--\eqref{edbt} into the right-hand side of \eqref{energydecomp}, the goal is to expand $\psi(x)$ in powers of $\epsilon$ up to order $\epsilon^2$, following Keller \& Merchant (1991), with the objective of capturing the dependence on the curvature.  This expansion, the details of which appear in the Appendices, yields
\beqn\label{mainresult}
\psi=\psi_0+\half\kappa(H-H_\circ)^2+\bar\kappa(K-K_\circ),
\eeqn
where $\kappa, \bar\kappa, H_\circ$, and $K_\circ$ are defined in \eqref{kappa}--\eqref{Ko}, respectively. $\kappa$ and $\bar\kappa$ are of order $\epsilon^2$ and terms of order $\text{o}(\epsilon^2)$ are neglected. The quantities $\phi_0, \kappa, \bar\kappa, H_\circ$, and $K_\circ$ depend on the point $x$ in $\cS$. Thus, $\psi$ may depends on $x$ through not only the mean and Gaussian curvatures but also through the splay and saddle-splay moduli and the spontaneous mean and Gaussian curvatures.

\section{Interpretation of results}\label{sectint}

The result \eqref{mainresult} deserves some interpretation. The first term on the right-hand side of \eqref{mainresult} only depends on the interaction potentials and number densities at $x$. This term remains in the limit as $\ddd$ (or, equivalently, $\epsilon$) approaches zero. The remainder of \eqref{mainresult} is the Canham--Helfrich free-energy density, augmented by a spontaneous Gaussian-curvature term. Hence, \eqref{mainresult} provides a microphysical derivation of the Canham--Helfrich free-energy density \eqref{CHenergy} for a lipid bilayer. Moreover, the abbreviations \eqref{A}--\eqref{Ko} describe how the spontaneous curvatures $H_\circ$ and $K_\circ$ and the bending moduli $\kappa$ and $\bar\kappa$ depend on the interaction energies between the lipid molecules and the densities of the molecules in the two leaflets. The term $\psi_0$ is independent of the shape of the membrane and not part of the Canham--Helfrich free-energy density. However, due to implicit dependence of the number densities on temperature, concentration, and relevant electromagnetic fields, that term encompasses effects associated with ambient temperature, concentration, and relevant electromagnetic conditions.

Since the number densities $W_1$ and $W_2$ need not be constant, the spontaneous curvatures $H_\circ$ and $K_\circ$ and the bending moduli $\kappa$ and $\bar\kappa$ generally vary with position $x$ on the bilayer. Since the free-energy density can be determined only up to an arbitrary constant, if it happens that both $\bar\kappa$ and $K_\circ$ are constant, then the the term $-\bar{\kappa}K_\circ$ can be dropped without loss of generality. Otherwise, the presence of the spontaneous Gaussian-curvature $K_\circ$ cannot be neglected.

To obtain some additional insight regarding the spontaneous curvatures and the bending moduli, suppose that the two leaflets are identical, so that there is no leaflet asymmetry. This means that $\psi_{11}=\psi_{22}=\psi_{12}$ and $W_1=W_2$. From \eqref{C} and \eqref{Ho}, we see that in this case $C=H_\circ=0$ and, hence, that there is no spontaneous mean-curvature. Moreover, if the lipid molecules are evenly distributed on each leaflet so that the number densities $W_1$ and $W_2$ are constant, then $D=0$ and, hence, by \eqref{Ko}, the spontaneous Gaussian-curvature $K_\circ$ also vanishes. These results are in agreement with what other researchers have proposed regarding spontaneous curvature. Specifically, Seifert (1997) proposes that spontaneous curvature is a measure of the asymmetry of the two leaflets. McMahon \& Gallop (1998) suggest that spontaneous curvature is a result of the leaflets having different molecular compositions. In the framework presented here, possible differences in the compositions of the leaflets are encompassed by allowing $\psi_{11}$ and $\psi_{22}$ to differ.

Ideally, it should be possible to obtain an interaction energy for a given type of lipid molecule and use \eqref{A}--\eqref{Ko} to explicitly calculate the spontaneous curvatures and bending moduli for a bilayer consisting of these lipid molecules. For example, consider an interaction potential of the form
\beqn
\phi_{ij}(s,a,b,c):=\half p_{ij}(s)(c^2-1) \text{if $s\ge \ell$ for all $(s,a,b,c)\in\Real\times\Real\times\Real\times[-1,1]$},
\label{MSresult}
\eeqn
which is taken from the Ma\"ier--Saupe (1958) theory of liquid-crystals. Here, $p_{ij}$, $i,j=1,2$, are functions of the square of the normalized distance $s=\epsilon^{-2}|x-y|^2$ between the lipid molecules. In this case, it is easily confirmed that $\phi_{ij,2}=\phi_{ij,3}=0$ and, hence, by \eqref{A} and \eqref{Ho}, $H_\circ=0$. Thus, when applied to lipid molecules, the Ma\"ier--Saupe (1958) interaction potential leads to zero spontaneous mean-curvature. Without some concrete information about the functions $p_{ij}$ it is not possible to say anything about the characteristic features of the moduli $\kappa$ and $\bar\kappa$ and, thus, in particular, about the sign of $\bar\kappa$. 

Consider next the simple generalization
\begin{multline}
\phi_{ij}(s,a,b,c):=\half p_{ij}(s)(c^2-1+(a-a_{ij})^2+(b-b_{ij})^2)
\\[4pt]
\text{if $s\ge \ell$ for all $(s,a,b,c)\in\Real\times\Real\times\Real\times[-1,1]$}
\label{MSresult2}
\end{multline}
of \eqref{MSresult}, in which $a_{ij}$ and $b_{ij}$ are constants that influence the relative orientations of molecules. If $a_{ij}=b_{ij}=0$, then pairs of molecules  prefer to adopt orientations in which their directors are perpendicular to the line through their centers. If $a_{ij}$ and/or $b_{ij}$ is not zero, then pairs of molecules prefer to adopt orientations that are not perpendicular to the line through their centers. For interaction energies of the form \eqref{MSresult2}, \eqref{A} and \eqref{Ho} yield
\beqn
H_\circ:=\frac{\pi}{8(B+C)}\int_0^\ell\big[W_1^2(x)p_{11}(r^2)(a_{11}-b_{11})-W_2^2(x)p_{22}(r^2)(a_{22}-b_{22})\big]r^3\dr,\\
\eeqn
which shows that the spontaneous mean-curvature $H_\circ$ generally differs from zero when $a_{ii}\not=b_{ii}$ for $i=1$ or $i=2$. 

For another more complicated proposed interaction energy see the work of Yuan (2010).

\section{Possible further developments}\label{sectfd}

There are numerous ways in which the results of the previous two sections might be generalized. These are mainly related to the weakening of the assumptions made in Section \ref{sectmain}. Namely, we will consider possible ways to weaken Assumptions (ii) and (iii).

Let us begin with Assumption (iii), which concerns the orientation of the lipid molecules. Although Helfrich (1973) argues that any energetic contribution due to molecular tilt should be negligible and the lipid molecules making up a lipid bilayer are almost ubiquitously depicted as being perpendicular to the surface formed by the bilayer (see, for example, McMahon \& Gallop (2005)), there is no reason to rule out the possibility that molecules may tilt relative to the surface normal. Moreover, any such tilt need not be small. For simplicity, consider only leaflet $1$ and, for brevity, drop the subscript $1$. Analogous statements apply to leaflet $2$. A possible approach to modeling a lipid monolayer made up of tilted molecules is to assume that the orientations of those molecules on the surface to be given by a vector field $\bd$, so that $\bd(x)$ gives the orientation of the molecules at the point $x$ on the surface representing the monolayer. This vector field should satisfy
\beqn\label{conditions}
\abs{\bd}=1\qquad{\rm and}\qquad \bd\cdot\bn>1.
\eeqn
While the first condition in \eqref{conditions} constrains the lipid molecules to be of fixed length, the second condition ensures that the lipid molecules point roughly in the same direction as $\bn$. If \eqref{conditions}$_2$ did not hold, then either the lipid molecules lie flat on the surface ($\bd\cdot\bn=0$), which seems physically unrealistic, or the tail of the lipid molecules are in contact with the adjacent solution ($\bd\cdot\bn<0$), which, due to the chemical properties of the lipid molecules, is physically unattainable. Recall that while the tail of a lipid molecule is hydrophobic, the head is hydrophilic. The orientation $\bd$ should take the place of $\bn$ in expressions such as \eqref{edtt}. It is possible to find a second-order tensor field $\bK$ such that
\beqn\label{Kcond1}
\bd=\bK\bn.
\eeqn
In essence, the tensor $\bK$ describes the rotation of $\bn$ into $\bd$. Notice that the condition \eqref{Kcond1} is not sufficient to uniquely determine $\bK$, however, if the natural condition
\beqn
\bK(y)\bz=\bz\rfa \bz\in T_y\cS,\ y\in\cS
\eeqn
is also imposed, where $T_y\cS$ is the tangent space to $\cS$ at $y$, then $\bK$ is uniquely determined. With the substitution \eqref{Kcond1}, an argument analogous to that employed in Section \ref{sectmain} should go through. In particular, we conjecture that an expression for the free-energy density with the same form as \eqref{mainresult} should emerge but that the corresponding counterparts of the abbreviations \eqref{A}--\eqref{D} will involve $\bK$ and its surface gradient at $x$.

To also weaken Assumption (ii), namely the assumption that the lipid molecules are inextensible, two steps are necessary. First, the condition \eqref{conditions}$_1$ must be removed. Second, the potential governing the interaction between the molecules in \eqref{intpoten} must be replaced by
\beqn
\Phi^\ddd(x-y,\bd(x),\be(y))=
\phi(\epsilon^{-2}\abs{x-y}^2,(x-y)\cdot\bd(x),(x-y)\cdot\be(x),\bd(x)\cdot\be(x),\abs{\bd(x)}^2,\abs{\be(y)}^2).
\label{intpoten}
\eeqn
Once again, an analysis similar to that appearing in Section \ref{sectmain} should be feasible and we believe a result similar to \eqref{mainresult} might emerge from such an analysis.

It should also be feasible to relax Assumption (iv) by adding to the right-hand side of \eqref{energydecomp} a term that accounts for interactions between the lipid bilayer and the ambient solution. In pursuing such a generalization, it might make sense to follow the lead of Paunov et al.~(2000).

Steigmann (1999) developed a general theory of lipid bilayers, regarded as fluid surfaces with bending elasticity. Aside from accommodating a completely general dependence of the free-energy density on the mean and Gaussian curvatures, Steigmann's (1999) theory accounts for the areal stretch of the lipid bilayer relative to some reference placement. It would be interesting to see if it would be possible to use an argument like that contained in Section \ref{sectmain} to obtain a free-energy density that incorporates dependence on the areal stretch. To achieve this, it would perhaps be necessary to introduce a reference placement for the lipid molecules and then to account for interactions relative to that reference placement. Along with the other ideas mentioned in this section, we leave the details of any such analysis to a subsequent work.

\section{Summary}\label{sectsum}

In Section \ref{sectmain} we presented a microphysical derivation for the Canham--Helfrich form of the free-energy density of a lipid bilayer that includes not only spontaneous mean-curvature but also spontaneous Gaussian-curvature. Our approach was inspired by the work of Keller \& Merchant (1991). The resulting expressions for the spontaneous curvatures allowed us to remark in Section \ref{sectint} that if the two leaflets are identical and the lipid molecules are evenly distributed across the surface of the bilayer, then the natural shape of that surface is flat. Moreover, in Section \ref{sectfd}, we observed that it should be feasible to extend our method to allow for molecular tilt, molecular extension, interactions between the bilayer and the adjacent solution, and areal stretch.

\appendix
\section{Appendices}

In these appendices we present the necessary linear algebra and differential geometry necessary to carry out the expansions needed in Section \ref{sectmain}. We also include some details of the expansion.

\subsection{Linear algebra}

Here we recall some basic properties of tensors. Let $\cV$ be a finite-dimensional vector space equipped with an inner-product. For the purposes of this work, a tensor $\bT$ of order $m$ is a multilinear mapping that takes in $m$ vectors and gives a real number. A vector can be viewed as a first-order tensor and vice versa. A linear mapping from $\cV$ to itself can be viewed as a second-order tensor and vice versa; we denote the set of all such linear mappings by $\Lin(\cV,\cV)$. Other such statements regarding higher-order tensors can also be made. For example, a fourth-order tensor may be viewed as a bilinear mapping from $\cV$ to the space of second-order tensors and vice versa.

Given a tensor $\bT$ of order $m$ and a tensor $\bS$ of order $n$, it is possible to form their tensor product $\bT\otimes\bS$, which is a tensor of order $m+n$ defined by
\begin{multline}
\bT\otimes\bS(\bv_1,\dots,\bv_m,\bv_{m+1},\dots,\bv_{m+n})
\\[4pt]
=\bT(\bv_1,\dots,\bv_m)\bS(\bv_{m+1},\dots,\bv_{m+n})\rfa \bv_i\in\cV,\ i\in\{1,\dots,m+n\}.
\end{multline}
The tensor product is associative, meaning that $(\bT\otimes\bS)\otimes\bR=\bT\otimes(\bS\otimes\bR)$ and, thus, the use of parenthesis in this context may be dropped.

Just as it is possible to take the inner-product of two vectors, it is possible to take the inner-product of two tensors of the same order. The details of how this can be done in general will not be spelled out here. However, there are two useful identities that we will need regarding the inner-product of tensors. Namely, for any vector $\bz$ and second-order tensor $\bL$, the following hold:
\begin{align}
\label{TI1}(\bz\otimes\bz)\cdot\bL&=\bz\cdot\bL\bz\\
\label{TI2}(\bz\otimes\bz\otimes\bz\otimes\bz)\cdot(\bL\otimes\bL)&=(\bz\cdot\bL\bz)^2
\end{align}

Assume the dimension of $\cV$ is $2$. The following lemma, which can be established using polar coordinates and the integral gradient theorem, involving second- and fourth-order tensors will also be needed. Here, ${\bf 1}$ denotes the identity mapping on $\cV$ and, thus, is a second-order tensor.
\begin{lemma}
Let $g:\Pos\longrightarrow\Real$ be a continuous function. Then
\begin{align}
\label{TI3}\int_{\abs{\bz}\leq \ell}g(\abs{\bz})\bz\otimes\bz\,\emph{d}a_{\bz}
&=\bigg(\pi\int_0^\ell g(r)r^3\,\emph{d}r\bigg)\textrm{\bf 1},
\\[4pt]
\label{TI4}\int_{\abs{\bz}\leq \ell}g(\abs{\bz})\bz\otimes\bz\otimes\bz\otimes\bz\,\emph{d}a_{\bz}
&=\bigg(\frac{3\pi}{4}\int_0^\ell g(r)r^5\,\emph{d}r\bigg)\textbf{\bf 1}\otimes\textbf{\bf 1}.
\end{align}
\end{lemma}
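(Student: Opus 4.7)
The plan is to reduce both identities to one-dimensional computations by introducing polar coordinates on the disk $\{|\bz|\le\ell\}\subset\cV$. Choosing an orthonormal basis $\{\be_1,\be_2\}$ of $\cV$ and writing $\bz = r\bm(\theta)$ with $\bm(\theta) = \cos\theta\,\be_1 + \sin\theta\,\be_2$, one has $\text{d}a_{\bz} = r\,\text{d}r\,\text{d}\theta$ and $\bz^{\otimes k} = r^k\bm(\theta)^{\otimes k}$. Since $g(|\bz|)$ depends on $\bz$ only through $r$, each of \eqref{TI3} and \eqref{TI4} factors cleanly into a purely radial integral $\int_0^\ell g(r)r^{k+1}\,\text{d}r$ times a purely angular tensorial integral $\int_0^{2\pi}\bm(\theta)^{\otimes k}\,\text{d}\theta$, with $k=2$ for \eqref{TI3} and $k=4$ for \eqref{TI4}.

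The angular integrals I would then treat by an isotropy argument rather than by direct component-by-component evaluation. For \eqref{TI3}, the substitution $\theta\mapsto\theta+\alpha$ shows that $\int_0^{2\pi}\bm\otimes\bm\,\text{d}\theta$ is invariant under every rotation of $\cV$, and hence is a scalar multiple of the unique (up to scaling) isotropic second-order tensor $\textbf{1}$. Taking traces and using $\text{tr}(\bm\otimes\bm) = |\bm|^2 = 1$ and $\text{tr}\,\textbf{1} = 2$ fixes that scalar to be $\pi$, giving \eqref{TI3}. The same rotation-invariance argument applied to $\int_0^{2\pi}\bm^{\otimes 4}\,\text{d}\theta$ shows it is an isotropic, fully symmetric fourth-order tensor; in two dimensions such tensors form a one-parameter family, which can be written as a multiple of the right-hand side of \eqref{TI4}. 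The normalizing constant is then pinned down by evaluating one scalar component explicitly, for example by noting that $\int_0^{2\pi}\cos^4\theta\,\text{d}\theta = 3\pi/4$.

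The underlying computation is routine; the only real point requiring care is the interpretation of $\textbf{1}\otimes\textbf{1}$ in \eqref{TI4}, which must be read as the fully symmetric isotropic fourth-order tensor that yields the correct value under the contraction identity \eqref{TI2} against $\bL\otimes\bL$ for the symmetric, curvature-type second-order tensors $\bL$ arising in Section \ref{sectmain}. As an alternative, one could bypass polar coordinates by using the integral gradient theorem alluded to in the preamble, writing $\bz = \nabla(\tfrac{1}{2}|\bz|^2)$ and integrating by parts to peel off one factor of $\bz$ at a time, but this ultimately reduces to the same foliation of the disk by concentric circles and offers no real simplification over the direct polar computation.
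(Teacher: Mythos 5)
Your proof is correct and is essentially the argument the paper itself gestures at (the lemma is stated without proof, with the remark that it can be established using polar coordinates); the only cosmetic difference is that you pin down the angular normalizations by rotation invariance plus a trace or single-component evaluation rather than by computing all components directly. Your caveat about \eqref{TI4} is the one genuine subtlety and is well taken: the angular moments are $\int_0^{2\pi} m_i m_j m_k m_l\,\text{d}\theta=\tfrac{\pi}{4}\left(\delta_{ij}\delta_{kl}+\delta_{ik}\delta_{jl}+\delta_{il}\delta_{jk}\right)$, so \eqref{TI4} holds literally only if $\mathbf{1}\otimes\mathbf{1}$ is read as the fully symmetrized isotropic fourth-order tensor (for instance, the $1122$ component of the left-hand side is $\tfrac{\pi}{4}\int_0^\ell g(r)r^5\,\text{d}r$, not $\tfrac{3\pi}{4}\int_0^\ell g(r)r^5\,\text{d}r$), and under that reading its contraction with $\bL\otimes\bL$ via \eqref{TI2} produces both $(\mathrm{tr}\,\bL)^2$ and $\mathrm{tr}(\bL^2)$ contributions, a point worth keeping in mind when the lemma is applied in \eqref{collectingterms}.
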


\subsection{Geometry of surfaces} \label{appsurf}

Consider a surface $\cS$ in a three-dimensional Euclidean point space $\cE$ with associated vector space $\cV$. Given a point $y$ in $\cS$, let $\cT_y\cS$ denote the tangent space of $\cS$ at $y$. Let $\bn$ denote a mapping that determines a unit-normal to the surface at each point. Given a mapping $h:\cS\longrightarrow\cW$ defined on the surface that takes values in some vector space $\cW$, the surface gradient $\nabla^\cS h$ of $h$ can be defined by
\beqn\label{defsg}
\nabla^\cS_x h:=\nabla_x h^e (\textbf{1}-\bn(x)\otimes\bn(x))\rfa x\in\cS,
\eeqn
where $h^e$ is an extension of $h$ to a neighborhood of $x$ and $\nabla_x h^e$ is the classical three-dimensional gradient of this extension at $x$. It can be shown that the definition of the surface gradient is independent of the extension used on the right-hand side of \eqref{defsg}.

Of particular interest is the opposite $\bL:=-\nabla^\cS\bn$ of the surface gradient $\nabla^\cS\bn$ of $\bn$---called the curvature tensor, which is a second-order tensor field defined on $\cS$. The curvature tensor is symmetric and has two scalar invariants: the mean curvature $H$ and Gaussian curvature $K$ defined by
\begin{align}
\label{H}H&:=\half {\rm tr}\,\bL,\\
\label{K}K&:=\half[({\rm tr}\,\bL)^2-{\rm tr}(\bL^2)].
\end{align}
If $\kappa_1$ and $\kappa_2$ are the two nontrivial eigenvalues of $\bL$, often called the principle curvatures, then
\begin{align}
H&=\half(\kappa_1+\kappa_2),\\
K&=\kappa_1\kappa_2.
\end{align}

A useful identity involving the curvature tensor is provided by the following result.
\begin{lemma}
For $\bz$ in $\cV$ and $x$ in $\cS$,
\beqn\label{needed}
\bn(x)\cdot(\nabla_x^\cS\bL\bz)\bz=\abs{\bL(x)\bz}^2.
\eeqn
\end{lemma}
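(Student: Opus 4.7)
The plan is to derive the identity by differentiating the orthogonality relation that $\bL$ always produces tangent vectors. Since $\bn$ is a unit vector field, differentiating $\bn(x)\cdot\bn(x)=1$ along any surface direction yields $\bn\cdot\nabla^\cS\bn=0$; by the definition $\bL=-\nabla^\cS\bn$, this means $\bn(x)\cdot\bL(x)\bw=0$ for every $\bw$ and every $x\in\cS$. In particular, for the fixed vector $\bz\in\cV$ appearing in the statement, the scalar field
\begin{equation*}
x\longmapsto\bn(x)\cdot\bL(x)\bz
\end{equation*}
vanishes identically on $\cS$, so its surface gradient is zero.

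Next, I would apply the product rule to the surface gradient of this scalar field and evaluate the resulting linear functional on $\bz$ itself. Using the conventions recalled in Appendix~\ref{appsurf}, one obtains
\begin{equation*}
0=\nabla^\cS_x\big(\bn\cdot\bL\bz\big)\bz
=\big(\nabla^\cS_x\bn\,\bz\big)\cdot\bL(x)\bz+\bn(x)\cdot\big(\nabla^\cS_x\bL\bz\big)\bz.
\end{equation*}
Since $\bz$ is held fixed (it does not depend on $x$), the only contributions to the derivative come from the $x$-dependence of $\bn$ and of $\bL$. Substituting $\nabla^\cS_x\bn=-\bL(x)$ turns the first term into $-\bL(x)\bz\cdot\bL(x)\bz=-\abs{\bL(x)\bz}^2$, and rearranging yields the claimed identity \eqref{needed}.

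The only mildly delicate point is notational: one must be careful that $\nabla^\cS_x\bL\bz$ in the statement denotes the surface derivative at $x$ of the vector field $y\mapsto\bL(y)\bz$, with $\bz$ regarded as a constant vector, and that the product rule applied to $\bn\cdot\bL\bz$ is valid because the surface gradient $\nabla^\cS$ as defined in \eqref{defsg} is a genuine derivation on scalar products of fields defined on $\cS$. Once these conventions are pinned down, the proof is a two-line computation, and I do not anticipate any real obstacle beyond making the tensorial bookkeeping precise.
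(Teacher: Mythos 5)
Your proposal is correct and follows essentially the same route as the paper: establish that $\bn(x)\cdot\bL(x)\bz=0$ on $\cS$, take the surface gradient of this relation in the direction $\bz$, and apply the product rule together with $\nabla^\cS\bn=-\bL$ to obtain $-\abs{\bL(x)\bz}^2+\bn(x)\cdot(\nabla_x^\cS\bL\bz)\bz=0$. The only cosmetic difference is that you justify the orthogonality relation by differentiating $\bn\cdot\bn=1$, whereas the paper invokes the definition and symmetry of $\bL$ directly; the substance of the argument is identical.
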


\begin{proof}
Using the definition and symmetry of $\bL$, we have
\beqn\label{intermediate}
\bn(x)\cdot\bL(x)\bz=0\rfa x\in\cS.
\eeqn
Taking the surface gradient of both sides of \eqref{intermediate} in the direction $\bz$, using the product rule, and evaluating the result at $x$ yields
\beqn
-\bL(x)\bz\cdot\bL(x)\bz+\bn(x)\cdot(\nabla_x^\cS\bL\bz)\bz=0,
\eeqn
which is \eqref{needed}.
\end{proof}

From here on fix a point $x$ in $\cS$. It is possible to parameterize a small neighborhood of $\cS$ at $x$ using a local curving $f$. Put $\cT:=T_x\cS$ and $\cN:=(T_x\cS)^\perp$, and let $\cU$ be an open neighborhood of $\textbf{0}$ in $\cV$. A local curving $f:\cU\longrightarrow\cE$ of $\cS$ at $x$ has the following properties:
\begin{enumerate}
\item $f(\textbf{0})=x$,
\item $f(\bu)\in\cS$ for all $\bu\in\cT\cap\cU$,
\item\label{item3} $f(\bu)-(x+\bu)\in\cN$ for all $\bu\in\cT\cap\cU$,
\item\label{item4} $f(\bu+\bw)=f(\bu)+\bw$ for all $\bu\in\cT\cap\cU$ and $\bw\in\cN\cap\cU$,
\item $\nabla_{\textbf{0}} f=\textbf{1}$.
\end{enumerate}

\begin{figure}[t]
\centering
\includegraphics[width=4in]{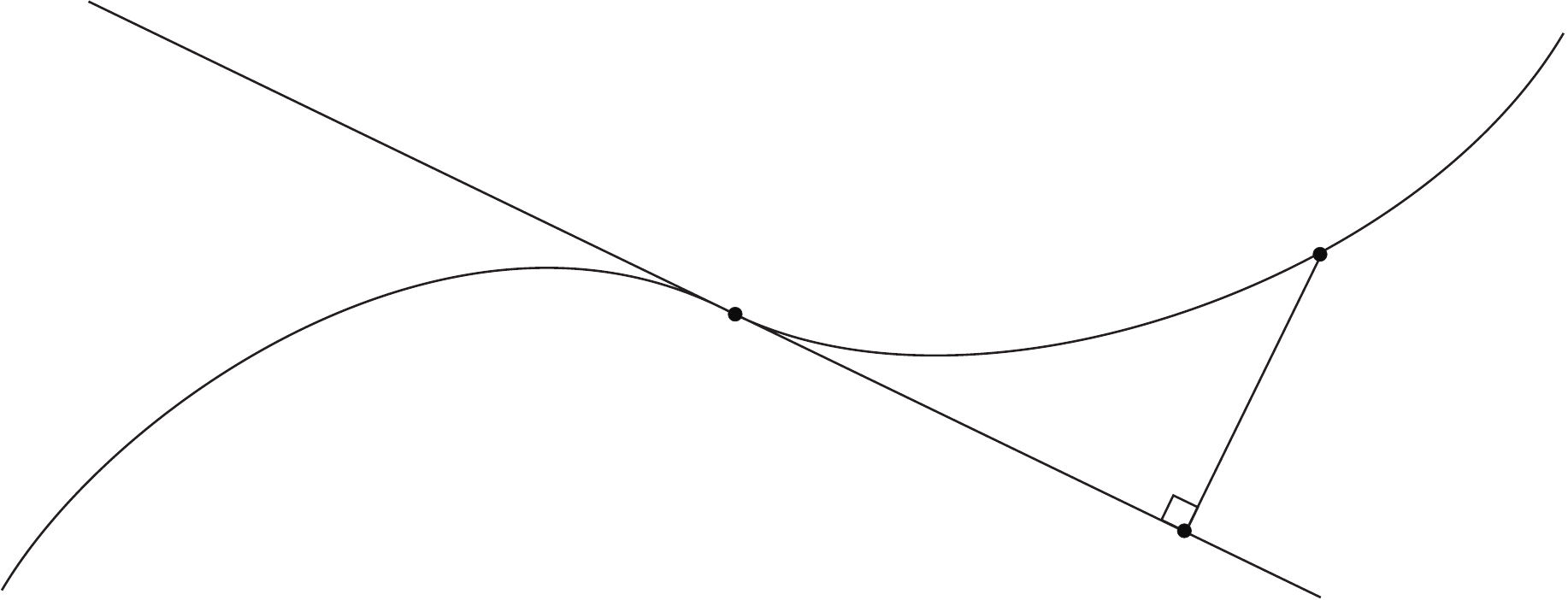}
\put(-165,40){$x$}
\put(-285,25){$\cS$}
\put(-247,105){$\cT$}
\put(-90,0){$x+\bu$}
\put(-76,70){$f(\bu)$}
\caption{Depiction of a local curving for a surface $\cS$ at $x$.}
\label{fig1}
\end{figure}

The gradients of $f$ at $\textbf{0}$ describe the local shape of the surface at $x$. Consider the mappings
\begin{align}
\bF&:\cU\longrightarrow\Lin(\cV,\cV),\\
\bLambda&:\cV\times\cV\longrightarrow\cV,\\
\bGamma&:\cV\times\cV\longrightarrow\Lin(\cV,\cV),
\end{align}
defined by 
\begin{align}
\label{F}\bF&:=\nabla f,\\
\bLambda(\bu,\bv)&:=[(\nabla_{\textbf{0}}\bF)\bu]\bv\rfa \bu,\bv\in\cV,\\
\bGamma(\bu,\bv)\bw&:=([(\nabla_{\textbf{0}}\nabla\bF)\bu]\bv)\bw\rfa \bu,\bv,\bw\in\cV.
\end{align}
The mapping $\bF$ is a second-order tensor field while the quantities $\bLambda$ and $\bGamma$ can be viewed as third- and fourth-order tensors, respectively. Using Item \ref{item3}, it can be shown that
\beqn\label{ranges}
\bLambda(\bz_1,\bz_2)\in\cN\qquad{\rm and}\qquad \bGamma(\bz_1,\bz_2)\bz_3\in\cN\rfa \bz_1,\bz_2,\bz_3\in\cT;
\eeqn
further, using Item \ref{item4}, it can be shown that
\beqn\label{nullsp}
\bLambda(\bz_1,\bn)=\textbf{0}\qquad{\rm and}\qquad \bGamma(\bz_1,\bz_2)\bn=\textbf{0}\rfa \bz_1,\bz_2\in\cT.
\eeqn
It follows from \eqref{ranges}--\eqref{nullsp} that
\beqn\label{traces}
{\rm tr}\, \bLambda\bz_1=0\qquad{\rm and}\qquad {\rm tr}(\bGamma(\bz_1,\bz_2))=0\rfa \bz_1,\bz_2\in\cT.
\eeqn
The curvature tensor $\bL(x)$ is related to $\bLambda$ by
\beqn\label{LambdaL}
\bLambda(\bz_1,\bz_2)=\bn(\bz_1\cdot\bL(x)\bz_2)\rfa \bz_1,\bz_2\in\cT.
\eeqn
It follows from \eqref{nullsp} and \eqref{LambdaL} that
\beqn
\bLambda\bz=\bn\otimes(\bL(x)\bz)\rfa \bz\in\cT.
\eeqn

\subsection{Useful approximations}

In this subsection we delineate the key expansions that will be needed in the asymptotic expansion of the energy. The three basic expansions, which follow from the results stated in Appendix \ref{appsurf}, are
\begin{align}
\label{apf}f(\epsilon\bz)&=x+\epsilon\bz+\frac{\epsilon^2}{2}(\bz\cdot\bL(x)\bz)\bn(x)+\text{o}(\epsilon^2),\\
\label{apn}\bn(f(\epsilon\bz))&=\bn-\epsilon\bL(x)\bz-\frac{\epsilon^2}{2}[(\nabla_x^\cS\bL)\bz]\bz+\text{o}(\epsilon^2),\\
\label{apF}\bF(\epsilon\bz)&=\textbf{1}+\epsilon\bn(x)\otimes\bL(x)\bz+\frac{\epsilon^2}{2}\bGamma(\bz,\bz)+\text{o}(\epsilon^2).
\end{align}
Using \eqref{needed} and \eqref{apf}--\eqref{apn}, it transpires that
\begin{align}
\label{ap1}\epsilon^{-2}\abs{x-f(\epsilon\bz)}^2&=\abs{\bz}^2+\frac{\epsilon^2}{4}(\bz\cdot\bL(x)\bz)^2+\text{o}(\epsilon^2),\\
\label{ap2}(x-f(\epsilon\bz))\cdot\bn(x)&=-\frac{\epsilon^2}{2}(\bz\cdot\bL(x)\bz)+\text{o}(\epsilon^2),\\
\label{ap3}(x-f(\epsilon\bz))\cdot\bn(f(\epsilon\bz))&=\frac{\epsilon^2}{2}(\bz\cdot\bL(x)\bz)+\text{o}(\epsilon^2),\\
\label{ap4}\bn(f(\epsilon\bz))\cdot\bn(x)&=1-\frac{\epsilon^2}{2}\abs{\bL(x)\bz}^2+\text{o}(\epsilon^2).
\end{align}
It can be shown that the cofactor $\Fc$ of $\bF$ is given by (Gurtin et al.~(2010))
\beqn\label{cofF}
\Fc:=\big[\bF^2-({\rm tr}\,\bF)\bF+\half[({\rm tr}\,\bF)^2-{\rm tr}(\bF^2)]\textbf{1}\big]^{\top}.
\eeqn
Using \eqref{needed}, \eqref{traces}, \eqref{apn}--\eqref{apF}, and \eqref{cofF}, it follows that
\begin{align}
\label{aparea}\bn(f(\epsilon\bz))\cdot\Fc(\epsilon\bz)\bn(x)=1+\frac{\epsilon^2}{2}\abs{\bL(x)\bz}^2+\text{o}(\epsilon^2).
\end{align}
Given a scalar-valued function $W$ defined only on the surface, if we write
\beqn
W'(x):=\nabla_x^\cS W\qquad{\rm and}\qquad W''(x):=\nabla_x^\cS\nabla^\cS W,
\eeqn
then
\begin{align}
\label{apW}W(f(\epsilon\bz))=W(x)+\epsilon W'(x)\bz+\frac{\epsilon^2}{2}\bz\cdot W''(x)\bz+\text{o}(\epsilon^2).
\end{align}

\subsection{Expansion of the energy}

Here we use the results of the previous two subsections to perform the expansion of \eqref{energydecomp} in powers of $\epsilon$. Since the domains of integration of the integrals \eqref{edtt}--\eqref{edbt} depend on $\epsilon$ through $d$, this cannot presently be achieved. To proceed, we introduce a change of variables that transfers this dependence to the integrand. To achieve this, we parameterize $\cS_\ddd(x)$ using the notion of a local curving $f$ introduced in Appendix \ref{appsurf}, which is made possible by \eqref{small}. In particular, it is convenient to introduce the set
\beqn
\cT_\ell(0):=\{\bz\in T_x\cS:\abs{\bz}\leq\ell\},
\eeqn
where $T_x\cS$ denotes the tangent space of $\cS$ at $x$. Focusing on \eqref{edtt} and using using the abbreviation \eqref{F}, the change of variables $y=f(\epsilon\bz)$ yields
\beqn\label{edttcv}
\psi_{11}(x)=\epsilon^{-2}\int_{\cT_\ell(0)}\Phi_{11}^\ddd (x-f(\epsilon\bz),\bn(x),\bn(f(\epsilon\bz)))W_{1}(x)W_{1}(f(\epsilon\bz))\cdot\bn(f(\epsilon\bz))\cdot\Fc(\epsilon\bz)\bn(x)\dz.
\eeqn
Let us focus our attention on the quantity
\begin{multline}
\Phi_{11}^\ddd(x-f(\epsilon\bz),\bn(x),\bn(f(\epsilon\bz))=
\\[4pt]
\phi_{11}(\epsilon^{-2}\abs{x-f(\epsilon\bz)}^2,(x-f(\epsilon\bz))\cdot\bn(x),
(x-f(\epsilon\bz))\cdot\bn(f(\epsilon\bz)),\bn(x)\cdot\bn(f(\epsilon\bz))).
\end{multline}
For all $0\le s\le\ell$, put 
\beqn
\label{appsi0}
\phi_{11,0}(s):=\phi_{11}(s^2,0,0,1),
\eeqn
and
\beqn
\label{appsi}
\phi_{11,k}(s):=\frac{\phi_{11}(\xi_1,\xi_2,\xi_3,\xi_4)}{\partial\xi_k}
\bigg|_{(\xi_1,\xi_2,\xi_3,\xi_4)\mskip1.5mu=\mskip1.5mu(s^2,0,0,1)},
\qquad
k\in\{1,2,3,4\}.
\eeqn
Using the expansions \eqref{ap1}--\eqref{ap4} and the notation \eqref{appsi0}--\eqref{appsi}, we have
\begin{multline}
\Phi_{11}^\ddd(x-f(\epsilon\bz),\bn(x),\bn(f(\epsilon\bz))
= \psi_{11,0}(\abs{\bz})
\\[4pt]
+\frac{\epsilon^2}{2}\Big(\half\phi_{11,1}(\abs{\bz})(\bz\cdot\bL(x)\bz)^2-\phi_{11,2}(\abs{\bz})(\bz\cdot\bL(x)\bz)\\[4pt]+\phi_{11,3}(\abs{\bz})(\bz\cdot\bL(x)\bz)+\phi_{11,4}(\abs{\bz})\abs{\bL(x)\bz}^2\Big)+\text{o}(\epsilon^2).
\end{multline}
where $\bL=-\nabla^{\cS}\bn$ denotes the curvature tensor of $\cS$ (see Appendix~\ref{appsurf}). Combining this expansion with \eqref{aparea} and using the expansion \eqref{apW} for $W_1$ yields an expansion for the right-hand side of \eqref{edttcv}:
\begin{multline}
\phi_{11}(x)=\int_{T_x\cS_1(0)}\Big(\phi_{11,0}(\abs{\bz})W^2_1(x)
+\frac{\epsilon^2}{2}W_1(x)\big[W_1(x)(\phi_{11,0}(\abs{\bz})\\[4pt]-\phi_{11,4}(\abs{\bz}))\abs{\bL(x)\bz}^2+\phi_{11,0}(\abs{\bz})\bz\cdot  W''_1(x)\bz+\frac{1}{2}W_1(x)\psi_{11,1}(\abs{\bz})(\bz\cdot\bL(x)\bz)^2\\[4pt]
\label{psi2}+W_1(x)(\phi_{11,3}(\abs{\bz})-\psi_{11,2}(\abs{\bz}))(\bz\cdot\bL(x)\bz)\big]\Big)\dz+\text{o}(\epsilon^2).
\end{multline}
Similarly, the right-hand sides of \eqref{edbb}--\eqref{edbt} can be expanded in powers of $\epsilon$, using notation similar to that of \eqref{appsi0}--\eqref{appsi}. Putting these expansions together, we obtain
\begin{align}
\nonumber\psi(x)&=\int_{T_\ell(0)}\big[\phi_{11,0}(\abs{\bz})W^2_1(x)+\phi_{22,0}(\abs{\bz})W^2_2(x)+2\phi_{12,0}(\abs{\bz})W_1(x)W_2(x)\big]\dz\\
\nonumber&\qquad+\frac{\epsilon^2}{2}\int_{T_\ell(0)}\big[W_1^2(x)(\phi_{11,3}(r)-\phi_{11,2}(r))-W_2^2(x)(\phi_{22,3}(r)\\
\nonumber&\qquad\qquad-\phi_{22,2}(r))\big] (\bz\cdot\bL(x)\bz) \dz\\
\nonumber&\qquad+\frac{\epsilon^2}{2}\int_{T_\ell(0)}\big[W_1^2(x)(\phi_{11,0}(\abs{\bz})-\phi_{11,4}(\abs{\bz}))+W^2_2(x)(\phi_{22,0}(\abs{\bz})\\
\nonumber&\qquad\qquad-\phi_{22,4}(\abs{\bz}))+2W_1(x)W_2(x)(\phi_{12,0}(\abs{\bz})+\phi_{12,4}(\abs{\bz}))\big]\abs{\bL(x)\bz}^2\dz\\
\nonumber&\qquad+\frac{\epsilon^2}{2}\int_{T_\ell(0)}\big[W_1^2(x)\phi_{11,1}(\abs{\bz})+W^2_2(x)\phi_{22,1}(\abs{\bz})\\
\nonumber&\qquad\qquad+2W_1(x)W_2(x)\phi_{12,1}(\abs{\bz})\big](\bz\cdot\bL(x)\bz)^2\dz\\
\nonumber&\qquad+\frac{\epsilon^2}{2}\int_{T_\ell(0)}\big[(W_1(x)\phi_{11,0}(\abs{\bz})+W_2(x)\phi_{12,0}(\abs{\bz})) (\bz\cdot W^{\prime\prime}_1(x)\bz)\\
\label{collectingterms}&\qquad\qquad+(W_2(x)\psi_{22,0}+W_1(x)\phi_{12,0}(\abs{\bz}))(\bz\cdot W^{\prime\prime}_2(x)\bz)]\dz.
\end{align}
Notice that the first term on the right-hand side of \eqref{collectingterms} is independent of $\epsilon$, while the others are proportional to $\epsilon^2$. Of the terms involving $\epsilon$, the first is linear is $\bL$ and the second and third are quadratic in $\bL$. The last term in \eqref{collectingterms} is independent of $\bL$. Using the identities \eqref{TI1}--\eqref{TI4}, the abbreviations
\begin{align}
\label{phi0}\psi_0(x)&:=2\pi\int_0^\ell\big[\phi_{11,0}(r)W^2_1(x)+\phi_{22,0}(r)W^2_2(x)+2\phi_{12,0}(r)W_1(x)W_2(x)\big]r\dr,\\
\label{A}A(x)&:=\frac{\pi}{2}\int_0^\ell\big[W_1^2(x)(\phi_{11,3}(r)-\phi_{11,2}(r))-W_2^2(x)(\phi_{22,3}(r)-\phi_{22,2}(r))\big]r^3\dr,\\
\nonumber B(x)&:=\frac{\pi}{2}\int_0^\ell\big[W_1^2(x)(\phi_{11,0}(r)-\phi_{11,4}(r))+W^2_2(x)(\phi_{22,0}(r)-\phi_{22,4}(r))\\
\label{B}&\qquad+2W_1(x)W_2(x)(\phi_{12,0}(r)+\phi_{12,4}(r))\big]r^3\dr,\\
\label{C}C(x)&:=\frac{3\pi}{16}\int_0^\ell\big[W_1^2(x)\phi_{11,1}(r)+W^2_2(x)\phi_{22,1}(r)+2W_1(x)W_2(x)\phi_{12,1}(r)\big]r^5\, dr,\\
\nonumber D(x)&:=\frac{\pi}{2}\int_0^\ell\big[(W_1(x)\phi_{11,0}(r)+W_2(x)\phi_{12,0}(r)){\rm tr} (W^{\prime\prime}_1(x))\\
\label{D}&\qquad+(W_2(x)\phi_{22,0}+W_1(x)\phi_{12,0}(r)){\rm tr} (W^{\prime\prime}_2(x))\big]r^3\dr,
\end{align}
and motivated by the grouping in \eqref{collectingterms}, we find (on suppressing explicit dependence on $x$) that
\beqn
\psi=\psi_0+\big[ A\,\text{tr}\, \bL+B\,\text{tr}(\bL^2)+C(\text{tr}\, \bL)^2+D\big]\epsilon^2+\text{o}(\epsilon^2).
\eeqn
Further, upon using \eqref{H}--\eqref{K} and the additional abbreviations
\begin{align}
\label{kappa}\kappa&:=8\epsilon^2(B+C),\\
\label{kappabar}\bar\kappa&:=-2\epsilon^2B,\\
\label{Ho}H_\circ&:=\frac{A}{4(B+C)},\\
\label{Ko}K_\circ&:=\frac{4D(B+C)-A^2}{8B(B+C)},
\end{align}
allows us to write the expansion of the right-hand side of \eqref{energydecomp} as\footnote{See \eqref{H} and \eqref{K} for the definitions of the mean $H$ and Gaussian $K$ curvatures.}
\beqn\label{mainresulta}
\psi=\psi_0+\half\kappa(H-H_\circ)^2+\bar\kappa(K-K_\circ),
\eeqn
where the correction of $\text{o}(\epsilon^2)$ has been neglected.

\subsection*{Acknowledgement}  

We thank Mohsen Maleki for very fruitful discussions.



\end{document}